\documentclass[letterpaper, 10 pt, conference]{ieeeconf}  % Comment this line out if you need a4paper

\IEEEoverridecommandlockouts                              % This command is only needed if 
\usepackage{graphics} % for pdf, bitmapped graphics files
\usepackage{epsfig} % for postscript graphics files
\usepackage{mathptmx} % assumes new font selection scheme installed
\usepackage{times} % assumes new font selection scheme installed
\usepackage{amsmath} % assumes amsmath package installed
\usepackage{amssymb}  % assumes amsmath package installed
\usepackage{comment}
\usepackage{graphicx}
\usepackage{float}
\usepackage{subfig}
\usepackage{xcolor}

\usepackage{amsthm}
\usepackage{subcaption}

\newtheorem{remark}{Remark}
\newtheorem{theorem}{Theorem}

\newtheorem{corollary}{Corollary}
\usepackage{cite}
\usepackage{hyperref}

\title{\LARGE \bf
Mean-field control barrier functions for stochastic multi-agent systems%: applications in coverage and shepherding control with dangerous regions
}

\author{Cinzia Tomaselli, Gian Carlo Maffettone, Samy Wu Fung, Levon Nurbekyan and Mario di Bernardo%<-this % stops a space
\thanks{Cinzia Tomaselli, Gian Carlo Maffettone and Mario di Bernardo are with the Modeling and Engineering Risk and Complexity program, Scuola Superiore Meridionale, Naples, Italy
        {\tt\small \{c.tomaselli\}\{gc.maffettone\}@ssmeridionale.it}}%
\thanks{Samy Wu Fung is with the Department of Applied Mathematics and Statistics, Colorado School of Mines Golden, USA
        {\tt\small swufung@mines.edu}}%
\thanks{Levon Nurbekyan is with the Department of Mathematics, Emory University, Atlanta, USA
        {\tt\small lnurbek@emory.edu}}%
\thanks{Mario di Bernardo is also with the Department of Electrical Engineering and Information Technologies of the University of Naples Federico II, Naples, Italy
        {\tt\small mario.dibernardo@unina.it}}%
}

\begin{document}

\maketitle
\thispagestyle{empty}
\pagestyle{empty}

%%%%%%%%%%%%%%%%%%%%%%%%%%%%%%%%%%%%%%%%%%%%%%%%%%%%%%%%%%%%%%%%%%%%%%%%%%%%%%%%
\begin{abstract}
Many applications involving multi-agent systems require fulfilling safety constraints. Control barrier functions offer a systematic framework to enforce forward invariance of safety sets. Recent work extended this paradigm to mean-field scenarios, where the number of agents is large enough to make density-space descriptions a reasonable workaround for the curse of dimensionality. However, an open gap in the recent literature concerns the development of mean-field control barrier functions for Fokker-Planck (advection-diffusion) equations. In this work, we address this gap, enabling safe mean-field control of agents with stochastic microscopic dynamics. We provide bounded stability guarantees under safety corrections and corroborate our results through numerical simulations in two representative scenarios, coverage and shepherding control of multi-agent systems.
\end{abstract}

%%%%%%%%%%%%%%%%%%%%%%%%%%%%%%%%%%%%%%%%%%%%%%%%%%%%%%%%%%%%%%%%%%%%%%%%%%%%%%%%
\section{INTRODUCTION}
Continuum descriptions of the macroscopic behavior of multi-agent systems provide an effective modeling and control framework to circumvent the curse of dimensionality \cite{d2023controlling}. A representative example is swarm robotics, where modeling and controlling the density of the group, rather than tracking positions and velocities of all individual agents, offers a tractable approach \cite{sinigaglia2022density, elamvazhuthi2023density}.

In many applications, safety plays a crucial role, as dynamical systems may be required to avoid dangerous states while performing a control task. Control barrier functions (CBFs) are a well-established framework to enforce safety \cite{ames2019control}. The idea is to correct nominal control actions so that forward invariance of some safety set is ensured. These methods naturally extend to stochastic systems \cite{clark2021control} and multi-agent settings \cite{glotfelter2017nonsmooth}.

Recently, CBFs have been extended to continuum, mean-field settings. In particular,   mean-field CBFs (MF-CBFs) were introduced in \cite{fung2025mean}  in which safety-critical mean-field control problems are addressed for continuity/Liouville equations. A generalization of this technique to Banach spaces is presented in \cite{gao2026banach}.

An open gap in the recent works \cite{fung2025mean} and \cite{gao2026banach} is their focus on mean-field descriptions of deterministic systems. This results in macroscopic density dynamics that are not influenced by diffusion. However, many mean-field approaches start by considering stochastic microscopic dynamics, e.g. \cite{maffettone2025leader-follower, lama2024shepherding, albi2016invisible, ascione2023mean}.
Hence, in this work we extend CBFs for stochastic agents \cite{clark2021control} to mean-field settings, providing a generalization of the framework described in \cite{fung2025mean}. 

We validate our theoretical setup in two representative scenarios. First, we focus on a coverage problem \cite{cortes2004coverage} in the presence of dangerous zones. In this setting, a swarm of controllable agents is tasked with reaching a constant density profile in some domain of interest while avoiding dangerous zones. Then, we consider a shepherding control problem from robotics \cite{lama2024shepherding}, in which a set of leader agents (or herders) needs to collect and confine a set of follower agents (or targets) into a predefined goal region. 
These problems have applications ranging from swarm robotics and traffic \cite{dorigo2021swarm, siri2021freeway} to environmental monitoring \cite{kakalis2008robotic}, where direct control of all agents is infeasible.

To set the stage for our contribution, we briefly recall CBFs for stochastic systems \cite{clark2021control}. Consider the stochastic controlled dynamics
\begin{align}\label{eq:micro_dynamics}
    \mathrm{d}\mathbf{z}(t) = \mathbf{f}(\mathbf{z}(t), \mathbf{u}(t)) \mathrm{d}t + \sqrt{2D} \mathrm{d}\mathbf{W}(t),
\end{align}
where $\mathbf{z} \in\Omega\subset\mathbb{R}^d$ is the system's state, $\mathbf{f}$ models its deterministic dynamics, $\mathbf{u}$ is a control input, and $\mathbf{W}$ is a standard Wiener process with diffusion coefficient $D>0$. The control input $\mathbf{u}$ is typically chosen to perform some control task and then adjusted to guarantee safety. The system remains safe if, for all $t\geq 0$, $\mathbf{z}\in\mathcal{C}$ with probability 1, where $\mathcal{C}$ represents the safe operating region
\begin{align}
    \mathcal{C} = \{\mathbf{z} : h(\mathbf{z})\geq 0\}, \quad \partial\mathcal{C} = \{\mathbf{z} : h(\mathbf{z}) = 0\},
\end{align}
with $h$ being a locally Lipschitz scalar function. Provided that $\mathbf{z}(0) \in\mathcal{C}$, the control action $\mathbf{u}$ ensures safety if, $\forall t$,
\begin{align}\label{eq:micro-SCBF}
    \frac{\mathrm{d}}{\mathrm{d}t} h(\mathbf{z})= \nabla h(\mathbf{z}) \cdot \mathbf{f}(\mathbf{z}, \mathbf{u}) + D \Delta  h(\mathbf{z})\geq - \alpha(h(\mathbf{z})),
\end{align}
where $\alpha$ is a continuous strictly increasing function such that $\alpha(0) = 0$ (class-$\mathcal{K}$ function). One can then adjust $\mathbf{u}$ into its safety-corrected version $\mathbf{u}^*$ in real time by solving
\begin{equation}\label{eq:micro-QP}
\begin{aligned}
&\mathbf{u}^*
= \arg\min_{\mathbf{u}^*}  \Vert\mathbf{u}^*-\mathbf{u}\Vert^2\\
& \text{s.t.}\quad
\nabla h(\mathbf{z}) \cdot \mathbf{f}(\mathbf{z}, \mathbf{u}) + D \, \Delta h(\mathbf{z})\geq - \alpha(h(\mathbf{z})).
\end{aligned}
\end{equation}
As noted in \cite{fung2025mean} for the deterministic case, CBFs are prone to the curse of dimensionality in multi-agent settings. This motivates more scalable formulations based on mean-field approximations.

The rest of the paper is organized as follows. %In Sec. \ref{sec:backgorunf} we go through CBFs for stochastic systems, the starting point of our setting. 
In Sec. \ref{sec: mf_cbf}, we introduce mean-field control barrier functions for advection-diffusion equations, representing the mean-field approximation of stochastic agents. In Sec. \ref{sec:kwo}, we introduce the metric we use to describe our safety sets. Finally, in Secs. \ref{sec:coverage} and \ref{sec:shepherding}, we apply our framework to two characteristic settings, coverage and shepherding. Conclusions in Sec. \ref{sec:conclusions} close our work.

\section{Mean Field Control Barrier Function for advection-diffusion Systems}
\label{sec: mf_cbf}
For collections of dynamical systems of the form \eqref{eq:micro_dynamics}, mean-field models offer more tractable and compact formulations. In this macroscopic setting, the focus shifts from tracking the states of a large number of agents, towards the dynamics of their probability density distribution.

Consider the evolution of a population density $\rho(t,\mathbf{x})$ in a domain $\Omega \subset \mathbb{R}^d$, equipped with boundary conditions that ensure mass conservation, such as periodic or reflective boundaries. In the mean-field limit, the empirical measure of agents governed by \eqref{eq:micro_dynamics} converges to the density, $\rho$, which evolves according to the advection--diffusion equation
\begin{equation}
\partial_t \rho(t,\mathbf{x})
+ \nabla \cdot \big(\rho(t,\mathbf{x})\, \mathbf{w}(t,\mathbf{x},\mathbf{u}(t,\mathbf{x}))\big)
= D\, \Delta \rho(t,\mathbf{x}),
\label{eq:PDE}
\end{equation}
where $\mathbf{w}$ denotes a velocity field parameterized by a Lipschitz continuous control input $\mathbf{u}$, and $D>0$ is the diffusion coefficient. Here, we operate within mean-field control/mean-field game frameworks, where the collective behavior of large populations of agents is modeled and controlled at the level of density distributions
\cite{lasry2007mean, agrawal2022random, lin2021alternating}.

Let $\mathcal{H}$ be a continuously differentiable functional on the space of densities, and define the associated safe set $\mathcal{C}$ as
\begin{equation}
\mathcal{C}
=
\left\{ \rho : \mathcal{H}(t,\rho) > 0 \right\},
\qquad
\partial\mathcal{C}
=
\left\{ \rho : \mathcal{H}(t,\rho)=0 \right\}.
\label{eq: safety_set_C}
\end{equation}
Following the MF-CBF paradigm, assuming $\mathcal{H}(t_0,\rho)\geq 0$ at initial time $t_0$, forward invariance of $\mathcal{C}$ is ensured by enforcing the mean-field control barrier condition
\begin{equation}
\frac{\mathrm{d}}{\mathrm{d}t} \mathcal{H}(t,\rho(t, \mathbf{x}))
\;\ge\;
-\alpha\!\left(\mathcal{H}(t,\rho(t, \mathbf{x}))\right),
\label{eq:CBF_condition}
\end{equation}
where $\alpha(\cdot)$ is a class-$\mathcal{K}$ function. The explicit time dependence of $\mathcal{H}$ allows this formulation to accommodate time-varying (e.g., moving) dangerous zones.

\begin{theorem}
Consider a density distribution $\rho(t,\mathbf{x})$ evolving according to the advection--diffusion equation~\eqref{eq:PDE} over a domain $\Omega$ subject to boundary conditions ensuring mass conservation, that is $\partial_t\left(\int_\Omega \rho\,\mathrm{d}\mathbf{x}\right)=0$. 
Let $\mathcal{H}(t, \rho)$ be a continuously differentiable functional defining the safe set $\mathcal{C}$ as in~\eqref{eq: safety_set_C}, and denote by $\delta_\rho \mathcal{H}$ its Fr\'{e}chet derivative with respect to $\rho$. Let $\alpha$ be a class-$\mathcal{K}$ function.
Then, if $\rho(t_0,\mathbf{x})\in \mathcal{C}$ at the initial time $t_0$, any control input $\mathbf{u}$ selected from the set
\begin{equation}
\begin{aligned}
\mathcal{K}_{CBF}(t,\rho)
=
\Big\{ \mathbf{u}:& \;
\int_{\Omega}
\nabla \delta_\rho \mathcal{H}(t,\rho)\cdot \rho\,\mathbf{w}(t,\mathbf{x},\mathbf{u})\, \mathrm{d}\mathbf{x}
\ge \\ &
- D \!\int_{\Omega}
\Delta\!\left(\delta_\rho \mathcal{H}(t,\rho)\right)\rho\, \mathrm{d}\mathbf{x}
\\&- \partial_t \mathcal{H}(t,\rho) 
- \alpha(\mathcal{H}(t,\rho))
\Big\}
\label{eq:safe_set_general}
\end{aligned}
\end{equation}
guarantees forward invariance of the safe set $\mathcal{C}$.
\end{theorem}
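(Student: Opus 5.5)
The plan is to compute the time derivative of $\mathcal{H}(t,\rho(t,\cdot))$ along solutions of \eqref{eq:PDE} with the chain rule for Fréchet derivatives. We then rewrite the spatial terms by integration by parts so that the membership condition in \eqref{eq:safe_set_general} becomes exactly the mean-field barrier inequality \eqref{eq:CBF_condition}. Forward invariance then follows from a scalar comparison argument, as in the standard CBF setting.

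\textbf{Step 1 (chain rule).} Since $\mathcal{H}$ is continuously differentiable, along a sufficiently regular solution we have
\begin{equation*}
\frac{\mathrm{d}}{\mathrm{d}t}\mathcal{H}(t,\rho)=\partial_t\mathcal{H}(t,\rho)+\int_\Omega \delta_\rho\mathcal{H}(t,\rho)\,\partial_t\rho\,\mathrm{d}\mathbf{x}.
\end{equation*}
We substitute $\partial_t\rho=-\nabla\cdot(\rho\mathbf{w})+D\Delta\rho$ from \eqref{eq:PDE}.

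\textbf{Step 2 (integration by parts).} For the advection term, one integration by parts gives
\begin{equation*}
-\int_\Omega \delta_\rho\mathcal{H}\,\nabla\cdot(\rho\mathbf{w})\,\mathrm{d}\mathbf{x}=\int_\Omega\nabla\delta_\rho\mathcal{H}\cdot\rho\mathbf{w}\,\mathrm{d}\mathbf{x}.
\end{equation*}
For the diffusion term, two integrations by parts (Green's second identity) give
\begin{equation*}
D\int_\Omega\delta_\rho\mathcal{H}\,\Delta\rho\,\mathrm{d}\mathbf{x}=D\int_\Omega\Delta(\delta_\rho\mathcal{H})\,\rho\,\mathrm{d}\mathbf{x}.
\end{equation*}
The boundary terms must vanish. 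For periodic domains this is immediate. For reflective boundaries, the mass-conservation hypothesis means the total flux $(\rho\mathbf{w}-D\nabla\rho)\cdot\mathbf{n}$ is zero on $\partial\Omega$. This kills the combined advection and first diffusion boundary term, and leaves only $D\int_{\partial\Omega}\rho\,\nabla\delta_\rho\mathcal{H}\cdot\mathbf{n}$.

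\textbf{Main obstacle.} I expect this leftover boundary term to be the only delicate point. I would handle it by assuming periodic boundaries, or that $\delta_\rho\mathcal{H}$ has zero normal derivative on $\partial\Omega$ (true e.g.\ when the dangerous zones are away from the boundary). I would state this as the precise sense of the boundary conditions ``ensuring mass conservation'' used in the theorem. Regularity of $\rho$ (classical, or weak formulations with $\delta_\rho\mathcal{H}$ a smooth test function) also needs to be assumed; I would invoke standard parabolic regularity for $D>0$.

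\textbf{Step 3 (conclusion).} Combining the above,
\begin{equation*}
\frac{\mathrm{d}}{\mathrm{d}t}\mathcal{H}=\partial_t\mathcal{H}+\int_\Omega\nabla\delta_\rho\mathcal{H}\cdot\rho\mathbf{w}\,\mathrm{d}\mathbf{x}+D\int_\Omega\Delta(\delta_\rho\mathcal{H})\rho\,\mathrm{d}\mathbf{x}.
\end{equation*}
Hence $\mathbf{u}\in\mathcal{K}_{CBF}(t,\rho)$ is equivalent to \eqref{eq:CBF_condition}. Set $\eta(t)=\mathcal{H}(t,\rho(t))$; then $\dot\eta\ge-\alpha(\eta)$ with $\eta(t_0)\ge 0$. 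By the comparison lemma, $\eta(t)\ge y(t)$, where $y$ solves $\dot y=-\alpha(y)$ with $y(t_0)=\eta(t_0)$. Since $y\equiv0$ is an equilibrium, $y$ cannot cross zero, so $y(t)\ge0$. Uniqueness is clean for locally Lipschitz $\alpha$. Otherwise one argues directly: at any first time where $\eta$ would become negative, $\dot\eta\ge-\alpha(\eta)>0$ on the negative side, a contradiction. This gives $\mathcal{H}(t,\rho(t))\ge0$ for all $t\ge t_0$, i.e.\ forward invariance of $\mathcal{C}$.
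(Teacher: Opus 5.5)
Your proposal is correct and follows the paper's proof essentially step for step: the chain rule for the functional, substitution of \eqref{eq:PDE}, one integration by parts on the advection term and two on the diffusion term, and then identifying membership in $\mathcal{K}_{CBF}$ with the barrier condition \eqref{eq:CBF_condition}. The paper leaves two things implicit that you make explicit, so your version is, if anything, more careful. First, the paper simply invokes the MF-CBF paradigm for invariance, where you give the comparison/first-crossing argument. Second, it asserts that boundary terms vanish, whereas you correctly note that for no-flux (reflective) boundaries a term proportional to $\int_{\partial\Omega}\rho\,\nabla\delta_\rho\mathcal{H}\cdot\mathbf{n}$ survives unless the domain is periodic or $\delta_\rho\mathcal{H}$ has zero normal derivative.
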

\begin{proof}
We compute the total time derivative of the functional $\mathcal{H}(t,\rho)$ appearing in \eqref{eq:CBF_condition}. 
\begin{comment}
We now derive an explicit expression for the total time derivative appearing in \eqref{eq:CBF_condition}.
\begin{theorem}
Let $\mathcal{H}$ be a continuously differentiable functional and let $\rho(t,\mathbf{x})$ satisfy the advection--diffusion equation~\eqref{eq:PDE} over a domain $\Omega$ ensuring mass conservation. Then, the total time derivative of $\mathcal{H}$ along the density evolution is given by
\begin{equation}
\begin{split}
\frac{d}{dt} \mathcal{H}(t,\rho(t))
= &
\int_{\Omega}
\nabla \delta_\rho \mathcal{H}(t,\rho)\cdot \rho(t,\mathbf{x})\,
\mathbf{w}(t,\mathbf{x},\mathbf{u}) \, \mathrm{d}\mathbf{x}  \\ & +
D\int_{\Omega}
\Delta\!\left(\delta_\rho \mathcal{H}(t,\rho)\right)\, \rho(t,\mathbf{x}) \, \mathrm{d}\mathbf{x}
+ \partial_t \mathcal{H}(t,\rho).
\end{split}
\label{eq:Hdot_final}
\end{equation}
\end{theorem}
\end{comment}

Using the chain rule for time-dependent functionals,
\begin{equation}
\frac{\mathrm{d}}{\mathrm{d}t}\mathcal{H}(t,\rho(t, \mathbf{x}))
=
\int_{\Omega}
\delta_\rho \mathcal{H}(t,\rho)\,
\partial_t \rho(t,\mathbf{x})\, \mathrm{d}\mathbf{x}
+
\partial_t \mathcal{H}(t,\rho).
\label{eq:chainrule}
\end{equation}

Substituting the dynamics \eqref{eq:PDE} into
\eqref{eq:chainrule} yields
\begin{equation}
\begin{aligned}
\frac{d}{dt}\mathcal{H}(t,\rho(t, \mathbf{x}))
=&
- \int_{\Omega}
\delta_\rho \mathcal{H}(t,\rho)\,
\nabla \cdot (\rho \mathbf{w})\, \mathrm{d}\mathbf{x}
\\ & \quad
+ D \int_{\Omega}
\delta_\rho \mathcal{H}(t,\rho)\,
\Delta \rho \, \mathrm{d}\mathbf{x}
+ \partial_t \mathcal{H}(t,\rho).
\end{aligned}
\label{eq:substituted}
\end{equation}
 
For the advection term in \eqref{eq:substituted}, integration by parts together with boundary conditions gives
\begin{equation}
- \int_{\Omega}
\delta_\rho \mathcal{H}\,
\nabla \cdot (\rho \mathbf{w})\, \mathrm{d}\mathbf{x}
=
\int_{\Omega}
\nabla \delta_\rho \mathcal{H}\cdot \rho \mathbf{w} \, \mathrm{d}\mathbf{x}.
\label{eq:int_by_parts1}
\end{equation}
For the diffusion term in \eqref{eq:substituted}, applying integration by parts twice (or equivalently,
the first Green identity) yields
\begin{equation}
D \int_{\Omega}
\delta_\rho \mathcal{H}\, \Delta \rho \, \mathrm{d}\mathbf{x}
=
D \int_{\Omega}
\Delta\!\left(\delta_\rho \mathcal{H}\right)\, \rho \, \mathrm{d}\mathbf{x}.
\label{eq:int_by_parts2}
\end{equation}
Combining \eqref{eq:int_by_parts1} and \eqref{eq:int_by_parts2} with
\eqref{eq:substituted}, we obtain
\begin{equation}
\begin{aligned}
\frac{\mathrm{d}}{\mathrm{d}t} \mathcal{H}(t,\rho(t, &\mathbf{x}))
= 
\int_{\Omega}
\nabla \delta_\rho \mathcal{H}(t,\rho)\cdot \rho(t,\mathbf{x})\,
\mathbf{w}(t,\mathbf{x},\mathbf{u}) \, \mathrm{d}\mathbf{x} \\ & 
+ D\int_{\Omega}
\Delta\!\left(\delta_\rho \mathcal{H}(t,\rho)\right)\, \rho(t,\mathbf{x}) \, \mathrm{d}\mathbf{x}
+ \partial_t \mathcal{H}(t,\rho)
\end{aligned}
\end{equation}

Enforcing \eqref{eq:CBF_condition} then yields the MF-CBF constraint
\begin{equation}
\begin{aligned}
& \int_{\Omega}
\nabla \delta_\rho \mathcal{H}(t,\rho)\cdot
\rho(t,\mathbf{x})\,\mathbf{w}(t,\mathbf{x},\mathbf{u}) \, \mathrm{d}\mathbf{x} 
\;\ge\; \\
& - D \int_{\Omega}
\Delta\!\left(\delta_\rho \mathcal{H}(t,\rho)\right)\,
\rho(t,\mathbf{x}) \, \mathrm{d}\mathbf{x}
-\partial_t \mathcal{H}(t,\rho)
- \alpha\!\left(\mathcal{H}(t,\rho)\right),
\end{aligned}
\label{eq:MF_CBF_final}
\end{equation}
which defines the set of safe control inputs in \eqref{eq:safe_set_general}, concluding the proof.
\end{proof}

Finally, given a nominal control input $\mathbf{u}(t,\mathbf{x})$, its safety-adjusted counterpart $\mathbf{u}^*(t,\mathbf{x})$ can be computed by solving:
\begin{equation}
\begin{aligned}
\mathbf{u}^*(t,\mathbf{x})
=& \arg\min_{\mathbf{u}'}  \mathcal{L}[\mathbf{u}'(t, \mathbf{x} )-\mathbf{u}(t, \mathbf{x} )]\\
& \text{s.t.}\quad
\mathbf{u}' \in \mathcal{K}_{\mathrm{CBF}}(t, \rho),
\end{aligned}
\label{eq:qp_general}
\end{equation}
where $\mathcal{L}$ is a quadratic functional cost such as the squared $\mathcal{L}^2(\Omega)$-norm. 

We note that \eqref{eq:safe_set_general} closely resembles its microscopic counterpart \eqref{eq:micro-SCBF}. Specifically, both the constraints involve a term related to the deterministic drift and a term related to diffusion arising from stochasticity.

\subsection{Bounded convergence with MF-CBFs}
We now analyze how MF-CBFs affect convergence properties in the representative scenario of a density control problem.
Consider \eqref{eq:PDE} and assume that, in the absence of dangerous regions, $\mathbf{u}$ is chosen such that $\lim\limits_{t\to\infty}\rho(t, \mathbf{x}) = \bar{\rho}(\mathbf{x})$. We denote by $\mathbf{w}(\mathbf{u}^*)$ the safety-corrected drift of \eqref{eq:PDE}.

\begin{theorem}
    If $\nabla\cdot\mathbf{w}(\mathbf{u}^*) \in\mathcal{L}^\infty(\Omega)$, $\nabla\cdot(\bar{\rho}\mathbf{w}(\mathbf{u}^*))\in\mathcal{L}^2(\Omega)$, $\Delta\bar{\rho}\in\mathcal{L}^2(\Omega)$ and $D > \frac{\Vert \nabla\cdot\mathbf{w}(\mathbf{u}^*)\Vert_\infty}{2}$,
    system \eqref{eq:PDE} subject to MF-CBFs correction is globally bounded stable.
\end{theorem}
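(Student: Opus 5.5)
Define the error $e(t,\mathbf{x}) = \rho(t,\mathbf{x}) - \bar{\rho}(\mathbf{x})$ and use the Lyapunov functional $V(t) = \frac{1}{2}\Vert e(t,\cdot)\Vert_{\mathcal{L}^2(\Omega)}^2$. The aim is a differential inequality $\dot V \le -cV + b$ with $c>0$ and $b$ depending only on the norms assumed bounded in the statement. A comparison (Grönwall) argument then gives $\limsup_{t\to\infty}\Vert e\Vert_2^2 \le 2b/c$ for every initial condition, which is global bounded stability.

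Since $\bar\rho$ is time independent, $\partial_t e = \partial_t \rho$. Substituting the safety-corrected dynamics \eqref{eq:PDE} with drift $\mathbf{w}(\mathbf{u}^*)$ and writing $\rho = e + \bar\rho$ gives
\begin{equation*}
\partial_t e = -\nabla\cdot(e\,\mathbf{w}(\mathbf{u}^*)) + D\Delta e - \nabla\cdot(\bar\rho\,\mathbf{w}(\mathbf{u}^*)) + D\Delta\bar\rho .
\end{equation*}
Multiplying by $e$ and integrating over $\Omega$, with the mass-conserving (periodic or reflective) boundary conditions killing boundary terms, yields three contributions.

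\emph{Diffusion.} Integration by parts gives $D\int_\Omega e\Delta e = -D\Vert\nabla e\Vert_2^2$.

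\emph{Advection.} Using $\int_\Omega e\,\nabla\cdot(e\mathbf{w}) = \frac12\int_\Omega e^2\,\nabla\cdot\mathbf{w}$, this term is bounded by $\frac{1}{2}\Vert\nabla\cdot\mathbf{w}(\mathbf{u}^*)\Vert_\infty\Vert e\Vert_2^2$.

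\emph{Forcing.} The terms $-\nabla\cdot(\bar\rho\mathbf{w}(\mathbf{u}^*)) + D\Delta\bar\rho$ lie in $\mathcal{L}^2$ by hypothesis. By Cauchy--Schwarz and Young's inequality they contribute at most $\frac{\varepsilon}{2}\Vert e\Vert_2^2 + \frac{1}{2\varepsilon}\Vert F\Vert_2^2$, where $F := -\nabla\cdot(\bar\rho\mathbf{w}(\mathbf{u}^*)) + D\Delta\bar\rho$.

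The main obstacle is turning the dissipation $-D\Vert\nabla e\Vert_2^2$ into a negative multiple of $\Vert e\Vert_2^2$. This needs a Poincaré-type inequality $\Vert\nabla e\Vert_2^2 \ge C_P\Vert e\Vert_2^2$, which applies because $\int_\Omega e\,\mathrm{d}\mathbf{x}=0$ by mass conservation (assuming $\bar\rho$ carries the same mass as $\rho$). The hypothesis $D > \Vert\nabla\cdot\mathbf{w}\Vert_\infty/2$ is stated without a Poincaré constant. So I would take the domain normalized so that $C_P\ge 1$, or absorb $C_P$ into the constants, for instance on a unit periodic domain where $C_P = 4\pi^2 \ge 1$.

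With this, $\dot V \le -\big(2D - \Vert\nabla\cdot\mathbf{w}\Vert_\infty - \varepsilon\big)V + \frac{1}{2\varepsilon}\Vert F\Vert_2^2$. Choosing $\varepsilon$ small makes the rate positive exactly under the stated condition on $D$. Grönwall then gives
\begin{equation*}
\Vert e(t)\Vert_2^2 \le e^{-ct}\Vert e(0)\Vert_2^2 + \frac{\Vert F\Vert_\infty^2}{\varepsilon c}, \qquad c = 2D-\Vert\nabla\cdot\mathbf{w}\Vert_\infty-\varepsilon > 0,
\end{equation*}
where $\Vert F\Vert_\infty$ denotes the supremum of $\Vert F(t,\cdot)\Vert_2$ over time, assumed finite. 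This bound concludes the argument.

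A secondary concern is that $\mathbf{u}^*$ is only defined pointwise in time through \eqref{eq:qp_general}. I would therefore read the stated norms as uniform-in-time bounds and not attempt to derive them.
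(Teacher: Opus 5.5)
Your proof is correct and follows the paper's argument. Both use the Lyapunov functional $\frac12\Vert e\Vert_2^2$, the identity $\int_\Omega e\,\nabla\cdot(e\mathbf{w}) = \frac12\int_\Omega e^2\,\nabla\cdot\mathbf{w}$, a Poincar\'e--Wirtinger bound with constant one on the diffusion term, and a comparison argument. The paper keeps the Cauchy--Schwarz bound and works with $\dot V \le -aV + b\sqrt{V}$, which retains the full rate $a = 2D - \Vert\nabla\cdot\mathbf{w}(\mathbf{u}^*)\Vert_\infty$ and gives the ultimate bound $(b/a)^2$. Your Young's-inequality step instead yields an affine inequality at the cost of an arbitrarily small $\varepsilon$ in the rate, and after optimizing $\varepsilon$ it gives the same ultimate bound. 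The assumptions you flag (zero mean of $e$, unit Poincar\'e constant, uniform-in-time norms) are used implicitly in the paper too, and the unit Poincar\'e constant holds exactly on the paper's $2\pi$-periodic domain $[-\pi,\pi]^2$.
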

\begin{proof}
    We consider the Lyapunov functional $V = \frac{1}{2}\Vert e\Vert_2^2$, where $e = \bar{\rho} - \rho$. Its time differentiation yields
    \begin{align}\label{eq:Vt}
        \frac{\mathrm{d}}{\mathrm{d}t}V(t) = \int_\Omega e(t, \mathbf{x})\partial_t e(t, \mathbf{x})\,\mathrm{d}\mathbf{x}.
    \end{align}
    The error dynamics obeys
    \begin{multline}\label{eq:err_dyn}
        \partial_te(t, \mathbf{x}) = \nabla\cdot (\bar{\rho}(\mathbf{x}) \mathbf{w}(t, \mathbf{x}, \mathbf{u}^*(t, \mathbf{x}))) \\- \nabla\cdot (e(t, \mathbf{x}) \mathbf{w}(t, \mathbf{x}, \mathbf{u}^*(t, \mathbf{x}))) -D\Delta\bar{\rho}(\mathbf{x}) + D\Delta e(t, \mathbf{x}).
    \end{multline}
    Substituting \eqref{eq:err_dyn} into \eqref{eq:Vt} yields (dropping explicit dependence from time and space)
    \begin{equation}\label{eq:Vt2}
    \begin{aligned}
        \frac{\mathrm{d}}{\mathrm{d}t}V ={}& \int_\Omega e\nabla\cdot(\bar{\rho}\mathbf{w}(\mathbf{u}^*))\,\mathrm{d}\mathbf{x}
        - \int_\Omega e\nabla\cdot(e\mathbf{w}(\mathbf{u}^*))\,\mathrm{d}\mathbf{x} \\
        &\; - D\int_\Omega e\Delta \bar{\rho}\,\mathrm{d}\mathbf{x}
        + D\int_\Omega e\Delta e\,\mathrm{d}\mathbf{x}.
    \end{aligned}
    \end{equation}
    We can establish the following bounds:   
    \begin{subequations}\label{eq:bounds}
    \begin{equation}\label{eq:bounds_a}
    \begin{aligned}
    \left\lvert \int_\Omega e \nabla\cdot(\bar{\rho}\mathbf{w}(\mathbf{u}^*)) \,\mathrm{d}\mathbf{x} \right\rvert
    &\leq \left\| e \nabla\cdot(\bar{\rho}\mathbf{w}(\mathbf{u}^*)) \right\|_1 \\
    &\leq \|e\|_2 \left\| \nabla\cdot(\bar{\rho}\mathbf{w}(\mathbf{u}^*)) \right\|_2 \\
    &= \sqrt{2}\left\| \nabla\cdot(\bar{\rho}\mathbf{w}(\mathbf{u}^*)) \right\|_2 \sqrt{V},
    \end{aligned}
    \end{equation}    
    \begin{equation}\label{eq:bounds_b}
    \begin{aligned}
    \int_\Omega e \nabla\cdot(e\mathbf{w}(\mathbf{u}^*)) \,\mathrm{d}\mathbf{x}
    &= -\int_\Omega \nabla e \cdot e\mathbf{w}(\mathbf{u}^*) \,\mathrm{d}\mathbf{x} \\
    &= -\frac{1}{2}\int_\Omega \nabla(e^2)\cdot\mathbf{w}(\mathbf{u}^*) \,\mathrm{d}\mathbf{x} \\
    &= \frac{1}{2}\int_\Omega e^2 \nabla\cdot\mathbf{w}(\mathbf{u}^*) \,\mathrm{d}\mathbf{x} \\
    &\leq \frac{1}{2}\left\| e^2 \nabla\cdot\mathbf{w}(\mathbf{u}^*) \right\|_1 \\
    &\leq \left\| \nabla\cdot\mathbf{w}(\mathbf{u}^*) \right\|_\infty V,
    \end{aligned}
    \end{equation}
    \begin{equation}\label{eq:bounds_c}
    \begin{aligned}
    D\left\lvert \int_\Omega e \Delta \bar{\rho} \,\mathrm{d}\mathbf{x} \right\rvert
    &\leq D\left\| e \Delta \bar{\rho} \right\|_1 \\
    &\leq D\|e\|_2 \left\| \Delta \bar{\rho} \right\|_2 = \sqrt{2}\,D \left\| \Delta \bar{\rho} \right\|_2 \sqrt{V},
    \end{aligned}
    \end{equation}
    \begin{equation}\label{eq:bounds_d}
    \begin{aligned}
    D\int_\Omega e \Delta e \,\mathrm{d}\mathbf{x}
    &= D\int_\Omega e \nabla\cdot(\nabla e) \,\mathrm{d}\mathbf{x} \\
    &= -D\int_\Omega \|\nabla e\|^2 \,\mathrm{d}\mathbf{x} \leq -2DV,
    \end{aligned}
    \end{equation}
    \end{subequations}
    where we use H$\mathrm{\Ddot{o}}$lder's inequality \cite{axler2020measure} and integration by parts in higher dimensions (boundary terms vanish due to boundary conditions), and the Poincarè-Wirtinger inequality~\cite{heinonen2001lectures}.

   Substituting \eqref{eq:bounds} into \eqref{eq:Vt2} yields
    \begin{align}
        \frac{\mathrm{d}}{\mathrm{d}t}V \leq -aV+b\sqrt{V},
    \end{align}
    where $a = 2D - \Vert \nabla\cdot\mathbf{w}(\mathbf{u}^*)\Vert_\infty$ (which is positive under the theorem hypothesis) and $b = \sqrt{2}(\Vert \nabla\cdot(\bar{\rho}\mathbf{w}(\mathbf{u}^*))\Vert_2+\Vert \Delta \bar{\rho}\Vert_2)$. The scalar dynamical system bounding $\frac{\mathrm{d}}{\mathrm{d}t}V$ has a globally stable equilibrium point at $(b/a)^{2}$, proving the claim by the comparison principle. Note that in the absence of dangerous zones, this equilibrium converges to 0.
\end{proof}

\begin{remark}
The condition $D > \frac{\|\nabla \cdot \mathbf{w}(\mathbf{u}^*)\|_\infty}{2}$ involves the safety-corrected drift, which itself depends on $D$ through~\eqref{eq:qp_general}. However, since the safety correction is computed first for a given $D$, the condition can be verified aposteriori.
\end{remark}

\section{Kernel-weighted Overlap as Mean-Field Control Barrier Function}\label{sec:kwo}

The MF-CBF framework developed in Section II applies to any continuously differentiable functional $\mathcal{H}$; in the remainder of this paper, we adopt a specific construction based on kernel-weighted overlaps, which provides a natural and computationally tractable way to quantify the interaction between the population density and dangerous regions. Specifically, we define a kernel-weighted overlap between probability measures and show that bounding this quantity suffices to limit the mass of the population density inside a dangerous region. To this end, assume that $O\subset \Omega$ is a dangerous region and $\rho_O$ is a probability measure on $O$. We call $O^c$ the complement of $O$ w.r.t. $\Omega$. The \textit{kernel-weighted overlap} between probability measures $\rho$ and $\rho_O$ is defined as
 \begin{equation}
R_k(\rho, \rho_O)=\iint_{\Omega\times\Omega}k(\mathbf{x},\mathbf{y})\rho(\mathbf{x})\rho_O(\mathbf{y})\,\mathrm{d}\mathbf{x}\mathrm{d}\mathbf{y},
\end{equation}
where $k(\mathbf{x},\mathbf{y})$ is a positive-definite kernel~\cite{gretton2012kernel} measuring some type of similarity between states $\mathbf{x},\mathbf{y}$. A typical choice is to use the Gaussian kernel  $k(\mathbf{x}, \mathbf{y}) = \exp\!\left(-\frac{\|\mathbf{x} - \mathbf{y}\|^2}{2\sigma^2}\right)$.

Furthermore, consider the quantity
\begin{equation}\label{eq:B(alpha)}
    \begin{split}
        B(\mu)=& \inf_{\int_O \rho(\mathbf{x})\mathrm{d}\mathbf{x} = \mu } R_k(\rho,\rho_O),
    \end{split}
\end{equation}
where $\mu \in [0,1]$. $B(\mu)$ represents the minimum kernel-weighted overlap achievable by any density that places exactly mass $\mu$ inside $O$. In what follows, we prove that $B$ is a strictly increasing function, so that the inequality $B(\mu)<\epsilon$ guarantees that $\int_O \rho(\mathbf{x})dx<  \mu$ for a suitably chosen $\epsilon$. Before we proceed, we introduce the notation
\begin{equation}\label{eq:C({x})}
    C(\mathbf{x})=\int_\Omega k(\mathbf{x},\mathbf{y})\rho_O(\mathbf{y})\mathrm{d}\mathbf{y},\quad \mathbf{x}\in \Omega.
\end{equation}

\begin{theorem}\label{thm:B(alpha)}
    The function $B$ admits the explicit representation
    \begin{equation}\label{eq:B(alpha)_formula}
        B(\mu)=\mu \cdot \inf_{\mathbf{x}\in O} C(\mathbf{x})+(1-\mu) \inf_{\mathbf{y}\in O^c} C(\mathbf{y}),\quad 0\leq \mu \leq 1.
    \end{equation}
    Hence, if $\inf_{\mathbf{x}\in O} C(\mathbf{x})> \inf_{\mathbf{y}\in O^c} C(\mathbf{y})$, then $B(\mu)$ is strictly increasing and
    \[
    R_k(\rho,\rho_O)< B(\mu) \Longrightarrow \int_O \rho(\mathbf{x})\mathrm{d}\mathbf{x} < \mu.
    \]
\end{theorem}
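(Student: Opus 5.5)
The plan is to rewrite $R_k$ as a linear functional of $\rho$ and then solve the infimum in \eqref{eq:B(alpha)} by splitting the mass between $O$ and $O^c$. By Fubini, with $C$ as in \eqref{eq:C({x})},
\[
R_k(\rho,\rho_O)=\int_\Omega C(\mathbf{x})\rho(\mathbf{x})\,\mathrm{d}\mathbf{x}
=\int_O C\rho\,\mathrm{d}\mathbf{x}+\int_{O^c} C\rho\,\mathrm{d}\mathbf{x}.
\]
The admissible class is all probability densities on $\Omega$ with $\int_O\rho=\mu$, hence $\int_{O^c}\rho=1-\mu$. Write $c_O=\inf_O C$ and $c_{O^c}=\inf_{O^c} C$; both are finite and nonnegative, since $k$ is bounded and nonnegative, e.g. Gaussian.

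\textbf{Lower bound.} For any admissible $\rho$, since $\rho\ge 0$, we have $\int_O C\rho\ge c_O\mu$ and $\int_{O^c}C\rho\ge c_{O^c}(1-\mu)$. Hence $B(\mu)\ge \mu c_O+(1-\mu)c_{O^c}$.

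\textbf{Upper bound.} I would build a minimizing sequence. Fix $\eta>0$ and choose points $\mathbf{x}_\eta\in O$, $\mathbf{y}_\eta\in O^c$ with $C(\mathbf{x}_\eta)<c_O+\eta$ and $C(\mathbf{y}_\eta)<c_{O^c}+\eta$. The function $C$ is continuous, because $k$ is continuous and bounded and dominated convergence applies. So there are small sets $A_\eta\subset O$ and $A'_\eta\subset O^c$ of positive measure on which $C<c_O+2\eta$ and $C<c_{O^c}+2\eta$, respectively. This needs $O$ and $O^c$ to have positive measure near these points, e.g. $O$ open or with nonempty interior; I would state this mild regularity assumption explicitly. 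Take $\rho_\eta=\mu\,\mathbf{1}_{A_\eta}/|A_\eta|+(1-\mu)\mathbf{1}_{A'_\eta}/|A'_\eta|$. Then $R_k(\rho_\eta,\rho_O)\le \mu c_O+(1-\mu)c_{O^c}+2\eta$. Letting $\eta\to0$ gives equality, which is \eqref{eq:B(alpha)_formula}. If densities are replaced by probability measures, one can instead use Dirac masses directly.

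\textbf{Monotonicity and the implication.} The formula is affine in $\mu$ with slope $c_O-c_{O^c}$, so $B$ is strictly increasing when $c_O>c_{O^c}$. Now let $\rho$ satisfy $R_k(\rho,\rho_O)<B(\mu)$, and set $\mu'=\int_O\rho$. By the definition of $B$ as an infimum, $B(\mu')\le R_k(\rho,\rho_O)<B(\mu)$, and strict monotonicity forces $\mu'<\mu$.

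The main obstacle is the upper bound. It requires justifying the continuity of $C$ and the nondegeneracy of $O$ and $O^c$, so that near-infimal values of $C$ can be approached by genuine densities. The lower bound and the final implication are immediate.
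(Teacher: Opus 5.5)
Your proposal is correct and follows essentially the same route as the paper. The lower bound comes from splitting the mass into $\mu$ on $O$ and $1-\mu$ on $O^c$ and bounding $C$ below by its infima there, and the matching upper bound comes from measures that concentrate mass near near-minimizers of $C$ in $O$ and in $O^c$; the paper uses $\mu\delta_{z_n}+(1-\mu)\delta_{w_n}$ ``or suitable smooth approximation of these,'' and your normalized-indicator construction, the regularity caveat needed when $\rho$ is restricted to densities (positive local measure of $O$ and $O^c$ near those near-minimizers), and the explicit step $B(\mu')\le R_k(\rho,\rho_O)<B(\mu)\Rightarrow\mu'<\mu$ make precise what the paper leaves implicit.
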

\begin{proof}
Let $\mu \in [0,1]$ and $\rho$ be such that $\int_O\rho(\mathbf{x})\mathrm{d}\mathbf{x}=\mu$, and $\rho_{in}(\mathbf{x})=\rho(\mathbf{x} \mid O)$, and $\rho_{out}(\mathbf{x})=\rho(\mathbf{x} \mid O^c)$. Then we have that
\[
\rho=\mu \rho_{in}+(1-\mu)\rho_{out},
\]
and
\[
\begin{aligned}
    R_k(\rho,\rho_O)&=R_k(\mu \rho_{in}+(1-\mu)\rho_{out},\rho_O)\\
    &=\mu R_k(\rho_{in},\rho_O)+(1-\mu)R_k(\rho_{out},\rho_O)\\
    &= \mu \int_O C(\mathbf{x}) \rho_{in}(\mathbf{x})\mathrm{d}\mathbf{x}+(1-\mu)\int_{O^c} C(\mathbf{y}) \rho_{out}(\mathbf{y})\mathrm{d}\mathbf{y}\\
    &\geq \mu \cdot \inf_{\mathbf{x}\in O} C(\mathbf{x})+(1-\mu)\cdot \inf_{\mathbf{y}\in O^c} C(\mathbf{y}).
\end{aligned}
\]
Hence, we have that 
\[
B(\mu) \geq \mu \cdot \inf_{\mathbf{x}\in O} C(\mathbf{x})+(1-\mu)\cdot \inf_{\mathbf{y}\in O^c} C(\mathbf{y}).
\]
Furthermore, let $(z_n)\subset O $ and $(w_n) \subset O^c$ be such that
    \[
    \lim_{n \to \infty}C(z_n) = \inf_{\mathbf{x}\in O} C(\mathbf{x}),\qquad \lim_{n \to \infty}C(w_n) = \inf_{\mathbf{y}\in O^c} C(\mathbf{y}).
    \]
    Then for $\rho_n=\mu \delta_{z_n}+(1-\mu)\delta_{w_n}$ (or suitable smooth approximation of these) we have that
    \begin{equation*}
        \begin{split}
            R_k(\rho_n,\rho_O)=&~\mu \cdot C(z_n)+(1-\mu)\cdot C(w_n)\\
            \longrightarrow & ~\mu \inf_{\mathbf{x}\in O} C(\mathbf{x})+ (1-\mu) \inf_{\mathbf{y}\in O^c}C(\mathbf{y}),
        \end{split}
    \end{equation*}
    and so
    \[
    B(\mu) \leq  \mu \cdot \inf_{\mathbf{x}\in O} C(\mathbf{x})+(1-\mu)\cdot \inf_{\mathbf{y}\in O^c} C(\mathbf{y}),
    \]
    which proves~\eqref{eq:B(alpha)_formula}. The remaining statements follow readily from~\eqref{eq:B(alpha)_formula}.
\end{proof}

\begin{corollary}\label{crl:A(alpha)andB(alpha)}
Under the hypotheses of Theorem~\ref{thm:B(alpha)}, for every $\mu \in (0,1)$ there exists $\epsilon>0$ such that $R_k(\rho,\rho_O)<\epsilon$ implies $\int_O \rho(\mathbf{x})\mathrm{d}\mathbf{x} <\mu$.
\end{corollary}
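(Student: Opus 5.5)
The plan is to take $\epsilon = B(\mu)$ and check that it is positive and that it does the job. I abbreviate $c_{in}=\inf_{\mathbf{x}\in O} C(\mathbf{x})$ and $c_{out}=\inf_{\mathbf{y}\in O^c} C(\mathbf{y})$. The hypotheses of Theorem~\ref{thm:B(alpha)} include $c_{in}>c_{out}$. Theorem~\ref{thm:B(alpha)} then gives the explicit formula $B(\mu)=\mu c_{in}+(1-\mu)c_{out}$. It also gives the implication $R_k(\rho,\rho_O)<B(\mu)\Rightarrow \int_O\rho\,\mathrm{d}\mathbf{x}<\mu$. So the corollary follows at once from that implication, provided $B(\mu)>0$.

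To verify $B(\mu)>0$, I first note $c_{out}\ge 0$. This holds because for a nonnegative kernel such as the Gaussian, $k\ge 0$ and $\rho_O\ge 0$, so $C(\mathbf{x})\ge 0$ everywhere and both infima are nonnegative. Since $c_{in}>c_{out}\ge 0$, we get $c_{in}>0$. Hence for $\mu\in(0,1)$,
\[
B(\mu)=\mu c_{in}+(1-\mu)c_{out}\ \ge\ \mu c_{in}\ >\ 0 .
\]
Taking $\epsilon=B(\mu)$, any $\rho$ with $R_k(\rho,\rho_O)<\epsilon$ satisfies $\int_O\rho\,\mathrm{d}\mathbf{x}<\mu$, which is the claim.

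For completeness I would also re-derive the implication directly rather than only citing it. Suppose, for contradiction, that $\mu':=\int_O\rho\,\mathrm{d}\mathbf{x}\ge\mu$. The definition \eqref{eq:B(alpha)} of $B$ as an infimum gives $R_k(\rho,\rho_O)\ge B(\mu')$. Because $B$ is affine with slope $c_{in}-c_{out}>0$, it is nondecreasing, so $B(\mu')\ge B(\mu)=\epsilon$. This contradicts $R_k(\rho,\rho_O)<\epsilon$.

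The only delicate point is the sign of $\epsilon$: the corollary asserts $\epsilon>0$, and this needs $C\ge 0$. A general positive-definite kernel can take negative values, in which case $c_{out}$ could be negative and $B(\mu)$ could fail to be positive. I would handle this by stating explicitly that $k$ is nonnegative (as in the Gaussian choice), which makes $C\ge 0$ and hence $\epsilon=B(\mu)>0$ by the display above. If a merely positive-definite kernel had to be covered, I would use the fact that the conclusion is monotone in $\epsilon$: any smaller positive $\epsilon'$ with $\epsilon'<B(\mu)$ still works, so the result holds whenever $B(\mu)>0$.
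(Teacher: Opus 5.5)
Your proposal is correct and follows the paper's proof, which simply takes $\epsilon=B(\mu)$ and invokes Theorem~\ref{thm:B(alpha)}. Your extra verification that $B(\mu)>0$ (via $k\ge 0$, as for the Gaussian kernel) fills in a step the paper does not address, and it is genuinely needed: for a positive-definite kernel taking negative values one can have $B(\mu)\le 0$, and then by the definition of $B$ as an infimum no positive $\epsilon$ works.
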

\begin{proof}
    Taking $\epsilon=B(\mu)$, the result follows from Theorem~\ref{thm:B(alpha)}.
\end{proof}

\begin{remark}
    Theorem~\ref{thm:B(alpha)} and Corollary~\ref{crl:A(alpha)andB(alpha)} establish that the kernel-weighted overlap is a valid quantity for limiting the mass of the population inside dangerous regions. In practice, as shown in Sections~\ref{sec:coverage} and~\ref{sec:shepherding}, the bound $\epsilon$ from Corollary~\ref{crl:A(alpha)andB(alpha)} is conservative, and looser bounds suffice.
\end{remark}

In practice, the threshold $\varepsilon$ need not satisfy the conservative bound $\varepsilon = B(\mu)$ from Corollary~\ref{crl:A(alpha)andB(alpha)}. Instead, it can be tuned as a design parameter: smaller values of $\varepsilon$ enforce a tighter safety margin between the population and the dangerous regions, at the cost of a more restrictive feasible set $\mathcal{K}_{CBF}(t,\rho)$, which may lead to larger deviations from the nominal control. The choice of $\varepsilon$ thus reflects a trade-off between safety stringency and control performance.

\begin{figure}[t]
    \centering
    %--- SUBFIGURES ---
    \subfloat[]
{
    \includegraphics{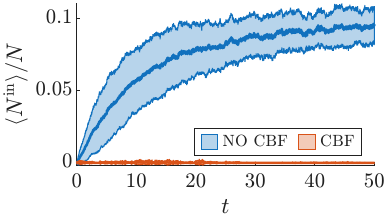}
    \label{fig:temporal_Nin}
}

\subfloat[]
{
    \includegraphics{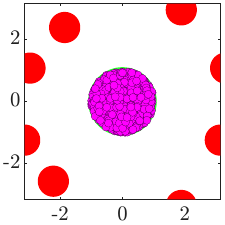}
    \label{fig: start_heat}
}
\subfloat[]
{
    \includegraphics{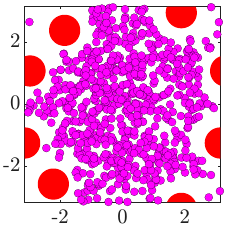}
    \label{fig: end_heat}
}
   \caption{{\em Safe coverage control.} Effect of MF-CBF enforcement on agents entering dangerous regions.  
    (a) Temporal evolution of the fraction of agents entering the dangerous regions averaged over 50 simulations. 
    (b) Initial and (c) final spatial configurations of the population (magenta disks) when MF-CBF is enforced.}
    \label{fig:N_int}
\end{figure}

\begin{figure}[t]
    \centering
    %--- SUBFIGURES ---
    \subfloat[]
{
    \includegraphics{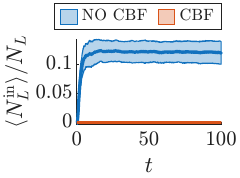}
    \label{fig:temporal_Nl}
}
\subfloat[]
{
    \includegraphics{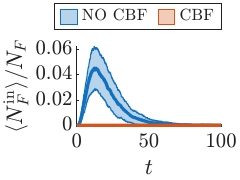}
    \label{fig:temporal_Nf}
}

\subfloat[]{
    \includegraphics{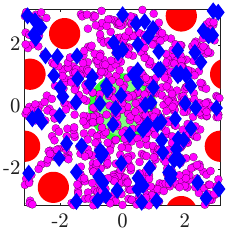}
    \label{fig:initial}
}
\subfloat[]{
    \includegraphics{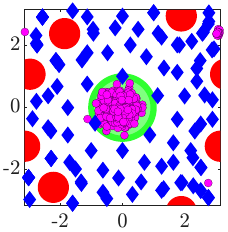}
    \label{fig:final}
}
   \caption{{\em Safe shepherding control.} Temporal evolution of the average fraction of leaders (a) and followers (b) entering the dangerous regions.
   (c) Initial and (d) final spatial configurations of the leaders (blue diamonds) and followers (magenta disks) populations when MF-CBF is enforced. At final time, a fraction $0.967$ of followers lies within the goal region}
    \label{fig: N_int_lf}
\end{figure}

\section{Applications and Numerical Validation}\label{sec:validation}

We validate the proposed MF-CBF framework in two representative application scenarios: coverage control and shepherding. Both are set in the periodic domain $\Omega=[-\pi,\pi]^2$ and involve static dangerous regions modeled as the union of disks with uniform density $\rho_O:\Omega\to\mathbb{R}_{\geq 0}$, vanishing outside their support. In both cases, the barrier function is defined as $\mathcal{H}(\rho) = \varepsilon - R_k(\rho,\rho_O)$, using a periodic Gaussian kernel with bandwidth $\sigma$, and safety is enforced by solving~\eqref{eq:qp_general}.
In all simulations, we choose $\alpha(s) = \gamma s$ with $\gamma = 0.1$, corresponding to a linear class-$\mathcal{K}$ function.

To approximate the density dynamics, we employ a particle-based representation~\cite{del2013mean}, in which the agents' positions $\mathbf{x}_k(t)$ ($k=1,\ldots,N$) are updated through the Euler-Maruyama scheme. The density $\rho(t,\mathbf{x})$ is reconstructed from agent positions via the empirical measure
\begin{equation}\label{eq:empirical_measure}
    \hat{\rho}(t,\mathbf{x}) = \frac{1}{N}\sum_{k=1}^{N}\delta\!\left(\{\mathbf{x},\mathbf{x}_k(t)\}_\Omega\right),
\end{equation}
with $\{\mathbf{x},\mathbf{y}\}_\Omega$ denoting the relative position wrapped on the periodic domain, computed element-wise. Specifically, for each component $i = 1,2$, due to the $2\pi$-periodicity of the domain,
\begin{equation} 
\begin{aligned}
    \{x_i, y_i\}_\Omega
= & \min\big\{
\,|x_i - y_i|,\; 2\pi- |x_i - y_i|
\big\}\,\operatorname{sign}(x_i - y_i) \\ 
& \cdot \operatorname{sign}(\pi-|x_i - y_i|)
\end{aligned}
\end{equation}
The corresponding wrapped distance is therefore given by 
$\|\{\mathbf{x}, \mathbf{y}\}_\Omega \|
    =
    \sqrt{
      \sum_{i=1}^2 
      \bigl( \{x_i, y_i\}_\Omega \bigr)^2 }$.
      
 Density-dependent quantities, including $\mathcal{H}(\rho)$ and $\delta_\rho\mathcal{H}$, are also evaluated using the empirical approximation \eqref{eq:empirical_measure}. The decision variable in~\eqref{eq:qp_general} becomes the stacked vector $[\mathbf{u}_1'(t)^\top,\ldots,\mathbf{u}_N'(t)^\top]^\top\in\mathbb{R}^{2N}$, and choosing $\mathcal{L}$ as the squared $\mathcal{L}^2(\Omega)$-norm yields the finite-dimensional quadratic program
\begin{equation}\label{eq:empirical_cost}
    \mathcal{L}^2[\mathbf{u}'-\mathbf{u}] \approx \frac{1}{N}\sum_{k=1}^{N}\left\|\mathbf{u}_k'(t)-\mathbf{u}_k(t)\right\|^2.
\end{equation}
For each scenario, we compare the cases with and without MF-CBF over 50 independent simulations. Our analysis focuses on the fraction of agents entering the dangerous regions, denoted by $N^{\mathrm{in}}/N$.

\subsection{Safe coverage control}\label{sec:coverage}

We consider a coverage problem~\cite{cortes2004coverage} in which $N=720$ agents must achieve a constant density profile while avoiding dangerous regions. The density evolves according to
\begin{equation}\label{eq:coverage_PDE}
    \partial_t\rho(t,\mathbf{x}) + \nabla\cdot\left[\rho(t,\mathbf{x})\,\mathbf{u}(t,\mathbf{x})\right] = D\,\Delta\rho(t,\mathbf{x}),
\end{equation}
where $\mathbf{u}$ is the control velocity field and $D>0$ is the diffusion coefficient. We exploit the stochastic behavior of the agents: when $\mathbf{u}=0$, \eqref{eq:coverage_PDE} reduces to the heat equation, which converges globally and asymptotically to the constant density profile under periodic boundary conditions. To avoid dangerous zones, $\mathbf{u}$ is adjusted via the MF-CBF constraint~\eqref{eq:qp_general}.

Simulations use $D=0.05$, final time $T=50$, time step $\delta t = 0.01$, kernel bandwidth $\sigma=0.2$, and barrier threshold $\varepsilon=0.01$. 

The population is initially distributed uniformly in the disk 
$\{\mathbf{x} \in \mathbb{R}^2 : \|\mathbf{x}\| \leq 1\}$ centered at the origin. 
Dangerous regions consist of five disks of radius $r_O = 0.5$. 
The center of each disk is at least $\lambda = 0.75$ away from the boundary of the initial population support, 
and the minimum distance between the boundaries of any two disks is $0.25$. 
Hence, the barrier constraints are satisfied at the initial time.

Fig.~\ref{fig:temporal_Nin} shows the average fraction of agents entering the dangerous regions. Without MF-CBF, this percentage grows rapidly and remains large. With MF-CBF, it stays close to zero throughout. The initial and final spatial configuration of the population is shown in Figs.~\ref{fig: start_heat} and \ref{fig: end_heat}. The latter confirms that at $t=50$, no agent lies in the dangerous regions.

\subsection{Safe shepherding control}\label{sec:shepherding}

We consider a shepherding problem~\cite{lama2024shepherding} in which leaders must confine followers into a goal region while both populations avoid dangerous areas. Most existing approaches to shepherding in the presence of dangerous regions rely on the assumption of cohesion among followers, which simplifies the confinement task by preserving group compactness. To the best of our knowledge, only few works address this problem without cohesion assumptions, including a reinforcement learning approach~\cite{punzo2025decentralized} and a heuristic strategy~\cite{tomaselli_icra2026}, both operating at the agent level. Here, we tackle this problem in a large-scale, density-based setting following~\cite{di2025continuification}, where analytical and computational tractability are facilitated by the continuum formulation.

Following \cite{maffettone2025leader-follower}, leaders and followers have densities $\rho_i:\mathbb{R}_{\geq 0}\times\Omega\to\mathbb{R}_{\geq 0}$, $i\in\{L,F\}$, governed by the coupled PDEs
\begin{align}
    \partial_t\rho_L(t,\mathbf{x}) + \nabla\cdot\left[\rho_L(t,\mathbf{x})\,\mathbf{u}(t,\mathbf{x})\right] &= 0, \label{eq:leaders_PDE}\\
    \partial_t\rho_F(t,\mathbf{x}) + \nabla\cdot\left[\rho_F(t,\mathbf{x})\,\mathbf{v}_F(t,\mathbf{x})\right] &= D\,\Delta\rho_F(t,\mathbf{x}), \label{eq:followers_PDE}
\end{align}
where $\mathbf{u}$ is the leader control field, $D>0$ is the follower diffusion coefficient, and $\mathbf{v}_F = (\mathbf{g}*\rho_L)(t,\mathbf{x})$ models repulsive leader--follower interaction through the kernel
\begin{equation}\label{eq:repulsive_kernel}
    \mathbf{g}(\{\mathbf{x},\mathbf{y}\}_\Omega) = \frac{\{\mathbf{x},\mathbf{y}\}_\Omega}{\|\{\mathbf{x},\mathbf{y}\}_\Omega\|}\exp\!\left(-\frac{\|\{\mathbf{x},\mathbf{y}\}_\Omega\|}{L_r}\right),
\end{equation}
with $L_r$ being the interaction length scale.

Leaders are directly controlled and not subject to diffusion, while followers are driven indirectly through~\eqref{eq:repulsive_kernel} and subject to stochastic perturbations.

The leader control $\mathbf{u}$ is designed following~\cite[Sec.~VI]{maffettone2025leader-follower} to minimize the $\mathcal{L}^2$ distance between $\rho_F$ and a desired von Mises profile $\bar{\rho}_F(\mathbf{x}) = \exp\!\left(k_1\cos(x_1)+k_2\cos(x_2)+\cos(x_1-x_2)\right)$ with $k_1=k_2=9$, chosen to confine $99\%$ of followers within a goal region of radius $r^*=1$ centered in the origin. Safety is enforced for both populations via~\eqref{eq:qp_general} with barrier thresholds $\varepsilon_L=0.013$ and $\varepsilon_F=0.01$.

Simulations use $N_L=100$ leaders, $N_F=720$ followers, $D=0.005$, $T=100$, $\delta t=0.01$, 
kernel bandwidth $\sigma=0.25$, and five dangerous disks of radius $r_O = 0.5$ placed outside the goal region. 
The centers of the disks are at least a distance $\lambda = 0.75$ from the boundary of the goal region, and the minimum distance between the boundaries of any two disks is $\lambda_O = 0.25$. 
Both populations are uniformly initialized in $\Omega$, excluding a ball of radius $\lambda_r = 0.75$ around each disk center, so that the barrier constraints are satisfied at the initial time.

Fig.~\ref{fig: N_int_lf} reports the average fraction of agents entering the dangerous regions. Without MF-CBF, leader penetration stabilizes above $12\%$ and follower penetration peaks above $4\%$. With MF-CBF, both remain close to zero. 
When MF-CBF is enforced in the presence of dangerous areas, the fraction of followers  within the goal region approaches 
$0.969 \pm 0.046$, while in the nominal case without MF-CBF it reaches 
$0.999 \pm 0.001$. This confirms that the convergence error remains bounded 
when the dangerous and goal regions do not overlap. 

The initial and final 
configurations are shown in Figs.~\ref{fig:initial} and \ref{fig:final}, respectively. Specifically, at final time $T$, the fraction of followers within the goal region is $0.967$.

\section{Conclusions}\label{sec:conclusions}
In this work, we extended the MF-CBF framework to advection--diffusion systems describing the macroscopic evolution of large populations of stochastic agents. The proposed formulation enforces safety constraints at the density level through an optimization problem, enabling deterministic analysis despite the stochastic microscopic dynamics. We showed that, under MF-CBF corrections, global asymptotic convergence in the absence of safety constraints implies bounded tracking error. To construct density-level safety constraints, we introduced the kernel-weighted overlap, a functional measuring the interaction between densities via a positive-definite kernel.

The framework was validated in two representative scenarios, coverage and shepherding control, in the presence of static dangerous regions modeled through a spatial density. Numerical simulations show that MF-CBFs significantly reduce the overlap between the agent density and dangerous regions while preserving the desired macroscopic behavior.

Future work will focus on deriving analytical bounds on the asymptotic tracking error and characterizing their dependence on the geometry of the dangerous regions and the MF-CBF-induced modification of the velocity field. Another promising direction is integrating MF-CBF constraints within optimal mean-field control formulations~\cite{ruthotto2020machine, chow2023numerical}. Moreover, we plan to integrate our framework in novel continuum models for decision making \cite{lama2025nonreciprocal}, which can be recast into advection-diffusion equations.

\section*{Code and Video Availability}
The code used to generate the results presented in this paper, together with the simulation video, is publicly available at:
\href{https://github.com/SINCROgroup/Mean-field-control-barrier-functions-for-stochastic-multi-agent-systems/tree/main}{https://github.com/SINCROgroup/Mean-field-control-barrier-functions-for-stochastic-multi-agent-systems/tree/main}.

\section*{Acknowledgements}
AI tools (ChatGPT) were used to assist with the revision and  grammar correction of the manuscript. All technical content,
including theoretical results, proofs, and numerical simulations, was
produced entirely by the authors.

\bibliographystyle{IEEEtran}

\begin{thebibliography}{10}
\providecommand{\url}[1]{#1}
\csname url@samestyle\endcsname
\providecommand{\newblock}{\relax}
\providecommand{\bibinfo}[2]{#2}
\providecommand{\BIBentrySTDinterwordspacing}{\spaceskip=0pt\relax}
\providecommand{\BIBentryALTinterwordstretchfactor}{4}
\providecommand{\BIBentryALTinterwordspacing}{\spaceskip=\fontdimen2\font plus
\BIBentryALTinterwordstretchfactor\fontdimen3\font minus
  \fontdimen4\font\relax}
\providecommand{\BIBforeignlanguage}[2]{{%
\expandafter\ifx\csname l@#1\endcsname\relax
\typeout{** WARNING: IEEEtran.bst: No hyphenation pattern has been}%
\typeout{** loaded for the language `#1'. Using the pattern for}%
\typeout{** the default language instead.}%
\else
\language=\csname l@#1\endcsname
\fi
#2}}
\providecommand{\BIBdecl}{\relax}
\BIBdecl

\bibitem{d2023controlling}
R.~M. D'Souza, M.~di~Bernardo, and Y.-Y. Liu, ``Controlling complex networks
  with complex nodes,'' \emph{Nature Reviews Physics}, vol.~5, no.~4, pp.
  250--262, 2023.

\bibitem{sinigaglia2022density}
C.~Sinigaglia, A.~Manzoni, and F.~Braghin, ``Density control of large-scale
  particles swarm through {PDE}-constrained optimization,'' \emph{IEEE
  Transactions on Robotics}, vol.~38, no.~6, pp. 3530--3549, 2022.

\bibitem{elamvazhuthi2023density}
K.~Elamvazhuthi and S.~Berman, ``Density stabilization strategies for
  nonholonomic agents on compact manifolds,'' \emph{IEEE Transactions on
  Automatic Control}, vol.~69, no.~3, pp. 1448--1463, 2024.

\bibitem{ames2019control}
A.~D. Ames, S.~Coogan, M.~Egerstedt, G.~Notomista, K.~Sreenath, and P.~Tabuada,
  ``Control barrier functions: Theory and applications,'' in \emph{Proceedings
  of the European Control Conference (ECC)}, 2019, pp. 3420--3431.

\bibitem{clark2021control}
A.~Clark, ``Control barrier functions for stochastic systems,''
  \emph{Automatica}, vol. 130, p. 109688, 2021.

\bibitem{glotfelter2017nonsmooth}
P.~Glotfelter, J.~Cort{\'e}s, and M.~Egerstedt, ``Nonsmooth barrier functions
  with applications to multi-robot systems,'' \emph{IEEE Control Systems
  Letters}, vol.~1, no.~2, pp. 310--315, 2017.

\bibitem{fung2025mean}
S.~W. Fung and L.~Nurbekyan, ``Mean-field control barrier functions: A
  framework for real-time swarm control,'' in \emph{2025 American Control
  Conference (ACC)}.\hskip 1em plus 0.5em minus 0.4em\relax IEEE, 2025, pp.
  323--329.

\bibitem{gao2026banach}
X.~Gao, G.~Pascual, S.~Brown, and S.~Mart{\'\i}nez, ``Banach control barrier
  functions for large-scale swarm control,'' \emph{arXiv preprint
  arXiv:2602.05011}, 2026.

\bibitem{maffettone2025leader-follower}
G.~C. Maffettone, A.~Boldini, M.~Porfiri, and M.~di~Bernardo,
  ``Leader–follower density control of spatial dynamics in large-scale
  multiagent systems,'' \emph{IEEE Transactions on Automatic Control}, vol.~70,
  no.~10, pp. 6783--6798, 2025.

\bibitem{lama2024shepherding}
A.~Lama and M.~di~Bernardo, ``Shepherding and herdability in complex multiagent
  systems,'' \emph{Physical Review Research}, vol.~6, no.~3, p. L032012, 2024.

\bibitem{albi2016invisible}
G.~Albi, M.~Bongini, E.~Cristiani, and D.~Kalise, ``Invisible control of
  self-organizing agents leaving unknown environments,'' \emph{SIAM Journal on
  Applied Mathematics}, vol.~76, no.~4, pp. 1683--1710, 2016.

\bibitem{ascione2023mean}
G.~Ascione, D.~Castorina, and F.~Solombrino, ``Mean-field sparse optimal
  control of systems with additive white noise,'' \emph{SIAM Journal on
  Mathematical Analysis}, vol.~55, no.~6, pp. 6965--6990, 2023.

\bibitem{cortes2004coverage}
J.~Cort{\'e}s, S.~Mart{\'\i}nez, T.~Karatas, and F.~Bullo, ``Coverage control
  for mobile sensing networks,'' \emph{IEEE Transactions on Robotics and
  Automation}, vol.~20, no.~2, pp. 243--255, 2004.

\bibitem{dorigo2021swarm}
M.~Dorigo, G.~Theraulaz, and V.~Trianni, ``Swarm robotics: Past, present, and
  future [point of view],'' \emph{Proceedings of the IEEE}, vol. 109, no.~7,
  pp. 1152--1165, 2021.

\bibitem{siri2021freeway}
S.~Siri, C.~Pasquale, S.~Sacone, and A.~Ferrara, ``Freeway traffic control: A
  survey,'' \emph{Automatica}, vol. 130, p. 109655, 2021.

\bibitem{kakalis2008robotic}
N.~M. Kakalis and Y.~Ventikos, ``Robotic swarm concept for efficient oil spill
  confrontation,'' \emph{Journal of Hazardous Materials}, vol. 154, no. 1-3,
  pp. 880--887, 2008.

\bibitem{lasry2007mean}
J.-M. Lasry and P.-L. Lions, ``Mean field games,'' \emph{Japanese Journal of
  Mathematics}, vol.~2, no.~1, pp. 229--260, 2007.

\bibitem{agrawal2022random}
S.~Agrawal, W.~Lee, S.~W. Fung, and L.~Nurbekyan, ``Random features for
  high-dimensional nonlocal mean-field games,'' \emph{Journal of Computational
  Physics}, vol. 459, p. 111136, 2022.

\bibitem{lin2021alternating}
A.~T. Lin, S.~W. Fung, W.~Li, L.~Nurbekyan, and S.~J. Osher, ``Alternating the
  population and control neural networks to solve high-dimensional stochastic
  mean-field games,'' \emph{Proceedings of the National Academy of Sciences of
  the United States of America}, vol. 118, no.~31, p. e2024713118, 2021.

\bibitem{axler2020measure}
S.~Axler, \emph{Measure, integration \& real analysis}.\hskip 1em plus 0.5em
  minus 0.4em\relax Springer, 2020.

\bibitem{heinonen2001lectures}
J.~Heinonen, \emph{Lectures on analysis on metric spaces}.\hskip 1em plus 0.5em
  minus 0.4em\relax Springer, 2001.

\bibitem{gretton2012kernel}
A.~Gretton, K.~M. Borgwardt, M.~J. Rasch, B.~Sch{\"o}lkopf, and A.~Smola, ``A
  kernel two-sample test,'' \emph{Journal of Machine Learning Research},
  vol.~13, no.~1, pp. 723--773, 2012.

\bibitem{del2013mean}
P.~Del~Moral, \emph{Mean field simulation for {M}onte {C}arlo integration
  (Monographs on Statistics and Applied Probability)}.\hskip 1em plus 0.5em
  minus 0.4em\relax Boca Raton, FL: Chapman \& Hall/CRC, 2013.

\bibitem{punzo2025decentralized}
C.~Punzo, I.~Napolitano, C.~Tomaselli, and M.~di~Bernardo, ``Decentralized
  shepherding of non-cohesive swarms through cluttered environments via deep
  reinforcement learning,'' \emph{arXiv preprint arXiv:2511.21405}, 2025.

\bibitem{tomaselli_icra2026}
C.~Tomaselli, S.~Covone, R.~Andreagiovanni, and M.~di~Bernardo, ``Multi-robot
  obstacle-aware shepherding of non-cohesive target agents,'' in
  \emph{Proceedings of the IEEE International Conference on Robotics and
  Automation (ICRA)}, 2026, accepted for publication.

\bibitem{di2025continuification}
B.~Di~Lorenzo, G.~C. Maffettone, and M.~Di~Bernardo, ``A continuification-based
  control solution for large-scale shepherding,'' \emph{European Journal of
  Control}, vol.~86, p. 101324, 2025.

\bibitem{ruthotto2020machine}
L.~Ruthotto, S.~J. Osher, W.~Li, L.~Nurbekyan, and S.~W. Fung, ``A machine
  learning framework for solving high-dimensional mean field game and mean
  field control problems,'' \emph{Proceedings of the National Academy of
  Sciences of the United States of America}, vol. 117, no.~17, pp. 9183--9193,
  2020.

\bibitem{chow2023numerical}
Y.~T. Chow, S.~W. Fung, S.~Liu, L.~Nurbekyan, and S.~Osher, ``A numerical
  algorithm for inverse problem from partial boundary measurement arising from
  mean field game problem,'' \emph{Inverse Problems}, vol.~39, no.~1, p.
  014001, 2023.

\bibitem{lama2025nonreciprocal}
A.~Lama, M.~Di~Bernardo, and S.~H. Klapp, ``Nonreciprocal field theory for
  decision-making in multi-agent control systems,'' \emph{Nature
  Communications}, vol.~16, no.~1, p. 8450, 2025.

\end{thebibliography}
% Generated by IEEEtran.bst, version: 1.14 (2015/08/26)

\end{document}